\newtheorem{theorem}{Theorem}
\newtheorem{prop}{Proposition}
\journal{Neural Networks}
\begin{document}

\begin{frontmatter}



\title{Versatile Auxiliary Regressor with Generative Adversarial network (VAR+GAN)}


\author{Shabab Bazrafkan, Peter Corcoran}

\address{National University of Ireland Galway}

\begin{abstract}

Being able to generate constrained samples is one of the most appealing applications of the deep generators. Conditional generators are one of the successful implementations of such models wherein the created samples are constrained to a specific class. In this work, the application of these networks is extended to regression problems wherein the conditional generator is restrained to any continuous aspect of the data. A new loss function is presented for the regression network and also implementations for generating faces with any particular set of landmarks is provided. 

\end{abstract}

\begin{keyword}
Generative Adversarial Networks \sep Conditional Generators \sep Face Generation


\end{keyword}

\end{frontmatter}


\section{Introduction}
\label{sec:intro}
Generative Adversarial Networks (GAN) \cite{GAN} are among the most successful implementations of deep generators. The idea of GAN is to train two agents, a generator, and a discriminator, simultaneously. The generator is a deep neural network which accepts a vector from a latent space (uniformly distributed noise) and outputs a sample, same type of the database. The discriminator is a binary classifier determining whether this sample is generated or is a genuine data coming from the database. The training is accomplished by playing a min-max game between these two networks. There are several extensions to the original GAN idea wherein the original GAN is adapted to a specific condition by changing the network structures and/or loss function. For example, Conditional GAN (CGAN) \cite{CGAN}, Auxiliary Classifier GAN (ACGAN) \cite{ACGAN}, and Versatile Auxiliary Classifier with GAN (VAC+GAN) \cite{VACGAN} are utilizing the original GAN idea to train conditional generators wherein the output of the generator is constrained to a specific class given the right input sequence. CGAN does this by partitioning the latent space and also the auxiliary knowledge of the data class. In ACGAN the loss of the CGAN is manipulated by adding a classification term which back-propagates through generator and discriminator. The VAC+GAN extends the ACGAN scheme to be more adaptable to different GAN variations. This is done by adding a classifier network in parallel with the discriminator network, and the classification error is back-propagated through the generator.\\
In this work, the idea of VAC+GAN is extended to regression problems by replacing the classifier with a regression network. A new loss function is presented for this network. The regression error is back-propagated through the generator. This gives the opportunity to train a generator while constraining the generated samples to any continuous aspect of the original database.\\
Similar ideas include the scheme presented in \cite{CBIGAN} wherein a Hierarchical Generative Model (HGM) is utilized for eye image synthesis and eye gaze estimation. This work introduces a variation of GAN known as conditional Bidirectional GAN (cBiGAN) which is a mixture of CGAN and Bidirectional GAN (BiGAN). The main issue with this method is the lack of adaptability to every GAN variation. This method is only applicable to BiGAN scheme. This approach is shown in figure \ref{fig:1}. In our observations, cBiGAN implementation is able to generate samples for each aspect but there are very low variations in generated samples for a specific aspect. The proposed VAR+GAN scheme produces higher variations in the same condition. This is discussed in more details in section \ref{sec:results}. The other advantage of the proposed method is its versatility, i.e., it applies to any GAN implementation regardless of the network architecture and/or loss function.\\
\begin{figure}
  \includegraphics[width=\columnwidth]{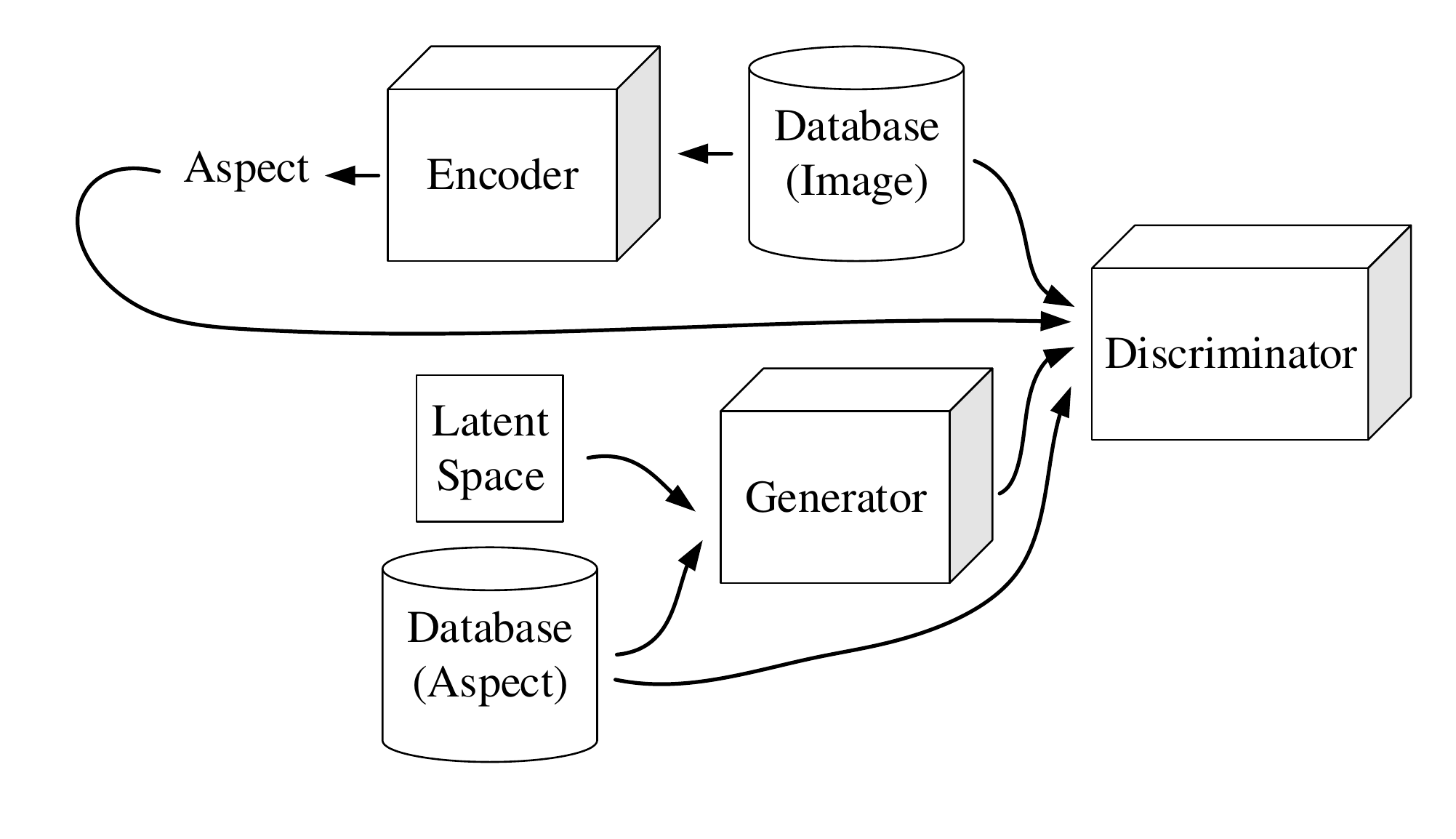}
\caption{cBiGAN scheme used for training conditional generators.}
\label{fig:1}       
\end{figure} 
In the next section, the idea of VAR+GAN is explained alongside with the presented loss function for regression network. Section \ref{sec:implementationAndResults} explains implementation, results and the comparisons for the presented method against cBiGAN and the conclusions and future works is discussed in the last section.

\section{Versatile Auxiliary Regressor with Generative Adversarial network (VAR+GAN)}
\label{sec:2}
The idea of proposed scheme is to place a regression network in parallel with the discriminator and back-propagate the regression error through the generator (see figure \ref{fig:2}). In this method, the generator is constrained to generate samples with specific continuous aspects. For example, in the face generation application, given the right latent sequence, the generator creates faces with particular landmarks.
\begin{figure}
  \includegraphics[width=\columnwidth]{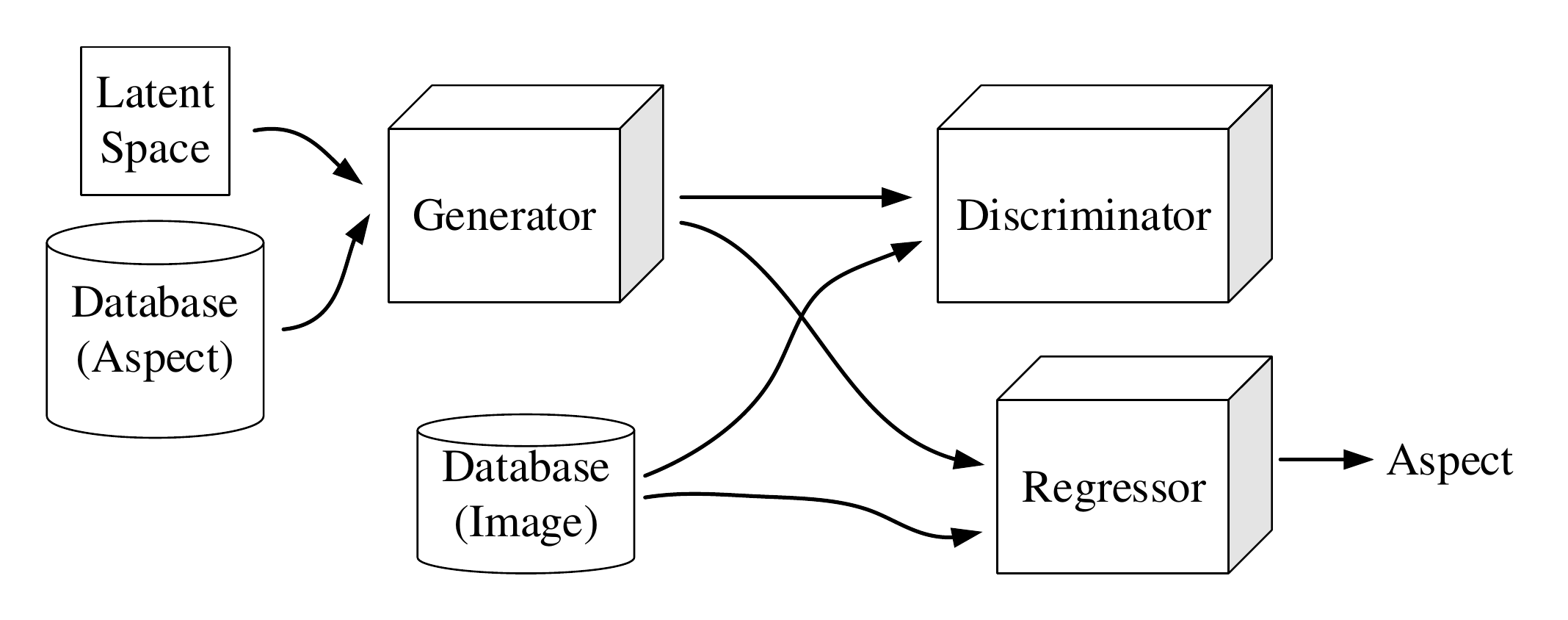}
\caption{Presented method (VAR+GAN) for training conditional generators.}
\label{fig:2}       
\end{figure}  The following loss function is introduced for the regression network
\begin{equation}
L_R = \int\int dp_z({\bf z})\bigg(-\log\Big(1-(y-R(G({\bf z})))\Big)\bigg)dz
\label{eq:LossR}
\end{equation}
wherein ${\bf z}$ is the latent space variable, $dp_z({\bf z})$ is the distribution of an infinitesimal partition of latent space, $y$ is the target variable (ground truth), $R$ is the regression function and $G$ is the generator function.
\begin{prop}
For the loss function in equation \ref{eq:LossR} the optimal regressor is
\begin{equation}
R^* = \frac{p({\bf x})}{c}+y-1
\label{eq:OptR}
\end{equation}
wherein $p({\bf x})$ is the distribution of the generator's output, $c$ is post-integration constant, and $y$ is the target function. 
\end{prop}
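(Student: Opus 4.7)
The plan is to reduce the functional $L_R$ to an integral over the output space of $G$ and then optimise pointwise. First I would change variables from $\mathbf{z}$ to $\mathbf{x}=G(\mathbf{z})$, letting $p(\mathbf{x})$ denote the pushforward of $p_z$ through $G$, so that equation \ref{eq:LossR} becomes
\begin{equation*}
L_R = \int p(\mathbf{x})\bigl(-\log(1-y+R(\mathbf{x}))\bigr)\,d\mathbf{x}.
\end{equation*}
This has the structural form of a cross-entropy between $p(\mathbf{x})$ and the surrogate quantity $q(\mathbf{x})=1-y+R(\mathbf{x})$, and that observation drives the whole argument.

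Next I would take the functional derivative of the integrand with respect to the pointwise value $R(\mathbf{x})$, obtaining
\begin{equation*}
\frac{\delta L_R}{\delta R(\mathbf{x})} = -\frac{p(\mathbf{x})}{1-y+R(\mathbf{x})}.
\end{equation*}
This quantity never vanishes when $p(\mathbf{x})>0$, so a naive zero-gradient condition admits no interior solution. The resolution is to treat the derivation in the spirit of cross-entropy minimisation: impose the implicit normalisation on $q$ via a Lagrange multiplier $c$, whereupon the stationarity condition becomes $p(\mathbf{x})/(1-y+R(\mathbf{x})) = c$. Solving this pointwise yields $R^*(\mathbf{x}) = y-1+p(\mathbf{x})/c$, matching equation \ref{eq:OptR}; here $c$ plays exactly the role of the ``post-integration constant'' mentioned in the statement.

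Finally I would confirm that this critical point is actually a minimiser by noting that $-\log(1-y+R)$ is convex in $R$, so the constrained cross-entropy functional has a unique global minimum at the stationary point identified above.

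I expect the main obstacle to be in step two, specifically pinning down the precise meaning of $c$. The loss as written in equation \ref{eq:LossR} is strictly decreasing in $R$ and admits no unconstrained interior optimum, so the derivation only makes sense once one articulates the implicit constraint on $1-y+R$ that singles out $c$. Once this normalisation is made explicit, the remaining algebra is immediate; most of the work of writing up the proof will go into justifying that implicit constraint rather than into any calculation.
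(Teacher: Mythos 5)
Your proposal reaches the same stationarity condition as the paper, $p(\mathbf{x})/(1-y+R(\mathbf{x}))=c$, and the same closed form $R^*=p(\mathbf{x})/c+y-1$, so it is correct in the same sense the paper's argument is; but the route to the constant $c$ is genuinely different and worth contrasting. The paper simply sets $\frac{d}{dR}\int dp_x\,(-\log(1-(y-R)))\,dx=0$, rewrites this as $\int\frac{-dp_x}{R-y+1}=0$, and then declares $\frac{p_x}{R-y+1}=c$ with $c$ a ``post-integration constant'' --- in effect asserting that the antiderivative of the integrand must be constant, without ever confronting the fact that the pointwise derivative $-p_x/(R-y+1)$ cannot vanish. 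You make exactly that observation explicit: the unconstrained functional is strictly monotone in $R$ and has no interior critical point, so the derivation only closes once one imposes a normalisation on $q(\mathbf{x})=1-y+R(\mathbf{x})$ and lets $c$ enter as the Lagrange multiplier, after which convexity of $-\log q$ gives uniqueness. What your version buys is a well-posed optimisation problem and an honest account of where $c$ comes from; what it costs is that the normalisation constraint on $q$ is nowhere stated in the paper, so you are supplying an assumption the authors leave implicit. What the paper's version buys is brevity and formal resemblance to the optimal-discriminator derivation in the original GAN paper, at the price of a step (``post-integration'') that is not justified as written. Since both arguments hinge on the same unproved ingredient --- some reason why $p_x/(R-y+1)$ should be spatially constant --- your reconstruction is the more defensible of the two, and your closing remark correctly identifies that articulating this constraint is the real content of the proof.
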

\begin{proof}
Considering the inner integration of equation \ref{eq:LossR} and by replacing $G({\bf z})={\bf x}$, the extremum of the loss function with respect to $R$ is
\begin{equation}
\frac{d}{dR}\int dp_x({\bf x})\bigg(-\log\Big(1-(y-R({\bf x}))\Big)\bigg)dx=0
\end{equation} 
which can be written as
\begin{equation}
\int\frac{-dp_x}{R-y+1}=0~~~\Rightarrow~~~\frac{p_x}{R-y+1}=c
\end{equation}
this results in
\begin{equation}
R = \frac{p({\bf x})}{c}+y-1
\end{equation}
concluding the proof.
\end{proof}
\begin{theorem}
Minimizing the loss function in equation \ref{eq:LossR} decreases the entropy of the generator's output.
\end{theorem}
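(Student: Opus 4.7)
The plan is to evaluate the loss $L_R$ at the optimal regressor $R^*$ supplied by Proposition 1 and show that, up to a constant, it equals the differential entropy of the generator's output distribution $p(\mathbf{x})$. Once that identity is in hand, the conclusion is immediate: minimizing $L_R$ (with the regressor already optimized) is equivalent to minimizing $H(p(\mathbf{x}))$.

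First I would rewrite the double integral in equation \ref{eq:LossR} as an expectation over $\mathbf{x} = G(\mathbf{z})$. Using the pushforward of $p_z$ under the generator map $G$, the integral $\int dp_z(\mathbf{z})\,f(G(\mathbf{z}))$ becomes $\int p(\mathbf{x})\,f(\mathbf{x})\,d\mathbf{x}$, where $p(\mathbf{x})$ is the density of the generator's output. This gives
\begin{equation}
L_R \;=\; \int p(\mathbf{x})\,\Big(-\log\!\big(1-(y-R(\mathbf{x}))\big)\Big)\,d\mathbf{x}.
\end{equation}

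Next, I would substitute the optimal regressor $R^*(\mathbf{x}) = p(\mathbf{x})/c + y - 1$ from Proposition 1 into the logarithm. The algebra is immediate: $1 - (y - R^*(\mathbf{x})) = p(\mathbf{x})/c$, so
\begin{equation}
L_R\big|_{R=R^*} \;=\; \int p(\mathbf{x})\,\Big(\log c - \log p(\mathbf{x})\Big)\,d\mathbf{x} \;=\; H\!\big(p(\mathbf{x})\big) + \log c,
\end{equation}
where $H(\cdot)$ is the differential entropy and I used $\int p(\mathbf{x})\,d\mathbf{x} = 1$. Since $\log c$ does not depend on the generator, any update of $G$ that decreases $L_R$ (with $R$ at or near its optimum) necessarily decreases $H(p(\mathbf{x}))$, establishing the theorem.

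The main obstacle I anticipate is a careful treatment of the constant $c$ and of the change-of-variables step: one must argue that $c$ is indeed independent of $G$ at the relevant level (it arises as a Lagrange-type constant from the extremum condition in Proposition 1, not from the generator's parameters), and that the pushforward identity applies. Apart from this bookkeeping, the proof reduces to a one-line substitution, so I would keep the writeup compact and rely on Proposition 1 rather than re-deriving $R^*$.
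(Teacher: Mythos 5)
Your proposal is correct and follows essentially the same route as the paper: substitute the optimal regressor $R^*$ from Proposition 1 into the loss, observe that $1-(y-R^*(\mathbf{x}))=p(\mathbf{x})/c$, and conclude $L_R = H(p(\mathbf{x}))+\log c$. Your added care about the pushforward change of variables and the independence of $c$ from $G$ is a welcome tightening of the same argument, not a different approach.
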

\begin{proof}
by replacing equation \ref{eq:OptR} in \ref{eq:LossR} we have
\begin{equation}
L_R = \int\int-\log\Big(\frac{p_x({\bf x})}{c}\Big)dp_xdx
\end{equation}
which can be rewritten as
\begin{equation}
L_R = -\int p_x({\bf x})\log(p_x({\bf x}))dx + \log(c) = H(p_x({\bf x}))+\log(c)
\end{equation}
wherein $H$ is the shannon entropy. Minimizing $L_R$ results in decreasing $H(p_x({\bf x}))$ concluding the proof.
\end{proof}
Adding the regressor to the model decreases the entropy of the generated
samples. This is expectable since the idea is to constrain the output of the
generator to obey some particular criteria. This is shown in observations in
section \ref{sec:results}.
\begin{theorem}
For any two sets of samples and their corresponding targets ($y_1$ and $y_2$), the loss function in equation \ref{eq:LossR} increases the Jensen Shannon Divergence (JSD) between generated samples for these two sets.
\end{theorem}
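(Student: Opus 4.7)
The plan is to mirror the derivation used in Theorem 2, applied separately to each of the two sets, and then connect the resulting sum of entropies to the Jensen--Shannon divergence through a standard identity.

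First, I would invoke Proposition 1 and Theorem 2 for each set individually. Let $p_1({\bf x})$ and $p_2({\bf x})$ denote the distributions of the generator output associated with targets $y_1$ and $y_2$ respectively. Since the loss in equation \ref{eq:LossR} decomposes additively over the two sets, and the optimal-regressor substitution used inside Theorem 2 is local in $y$, the total loss evaluated at its optimum should take the form
\begin{equation*}
L_R^{\mathrm{total}} = H(p_1) + H(p_2) + \log(c_1 c_2).
\end{equation*}

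Next, I would invoke the standard identity
\begin{equation*}
\mathrm{JSD}(p_1, p_2) = H(M) - \tfrac{1}{2}\bigl(H(p_1) + H(p_2)\bigr), \quad M = \tfrac{1}{2}(p_1 + p_2),
\end{equation*}
which follows by expanding the two KL divergences in the definition of the JSD and simplifying. Rearranging gives $H(p_1) + H(p_2) = 2H(M) - 2\,\mathrm{JSD}(p_1, p_2)$, and substituting back yields
\begin{equation*}
L_R^{\mathrm{total}} = 2H(M) - 2\,\mathrm{JSD}(p_1, p_2) + \log(c_1 c_2),
\end{equation*}
so that minimizing $L_R^{\mathrm{total}}$ is equivalent to maximizing $\mathrm{JSD}(p_1, p_2) - H(M)$.

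To close the argument I would observe that $M$ is precisely the marginal distribution of the pooled generator output, and that the adversarial discriminator in the surrounding GAN pins this marginal near the true data distribution; hence $H(M)$ is effectively fixed during training and any decrease in $L_R^{\mathrm{total}}$ must be absorbed by an increase in $\mathrm{JSD}(p_1, p_2)$, yielding the stated claim. The hardest step is exactly this last one: the theorem as stated examines the regressor loss in isolation, so the stabilization of $H(M)$ by the discriminator only enters implicitly. A cleaner version of the proof would either fold $H(M) \approx \mathrm{const}$ explicitly into the hypotheses or supply a bound showing $H(M)$ cannot fall as fast as $H(p_1) + H(p_2)$; the naive concavity inequality $H(M) \geq \tfrac{1}{2}(H(p_1) + H(p_2))$ only guarantees $\mathrm{JSD} \geq 0$ and is insufficient on its own to force the monotone increase claimed.
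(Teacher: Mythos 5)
There is a genuine gap, and it is exactly where you suspect it is --- but the fix is not to control $H(M)$; it is that the per-set substitution you start from is not valid in the two-population setting. The regressor $R$ is a \emph{single shared function} evaluated on samples from both sets, so you cannot invoke Proposition 1 separately for each target: the two per-set optima $R = p_1/c + y_1 - 1$ and $R = p_2/c + y_2 - 1$ are mutually inconsistent wherever the supports of $p_1$ and $p_2$ overlap, and on disjoint supports the theorem is vacuous. The paper instead writes the combined loss as $L_R = -\int p_{1} \log(c_1+R) - \int p_{2}\log(c_2+R)$ with $c_i = 1-y_i$ and optimizes the shared $R$ \emph{jointly}, setting $\frac{p_{1}}{c_1+R}+\frac{p_{2}}{c_2+R}=0$ to obtain $R^* = -\frac{p_{1}c_2+p_{2}c_1}{p_{1}+p_{2}}$. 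This coupling is the whole point: substituting $R^*$ back gives $c_1 + R^* = \frac{(c_1-c_2)\,p_{1}}{p_{1}+p_{2}}$ and $c_2 + R^* = \frac{(c_2-c_1)\,p_{2}}{p_{1}+p_{2}}$, so the loss collapses to $L_R = \mathrm{const} - 2\,\mathrm{JSD}(p_{1}\Vert p_{2})$ directly, with no residual $H(M)$ term and hence no need for any assumption about the discriminator pinning the pooled marginal. (This is the same mechanism as the original GAN analysis, where the shared discriminator's optimum $\frac{p_{\mathrm{data}}}{p_{\mathrm{data}}+p_g}$ is what manufactures the JSD.)

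Your decomposition $L_R^{\mathrm{total}} = H(p_1)+H(p_2)+\log(c_1c_2)$ therefore does not describe the optimum of the actual shared-regressor loss, and the subsequent appeal to $\mathrm{JSD}(p_1,p_2) = H(M) - \tfrac{1}{2}(H(p_1)+H(p_2))$ --- while a correct identity --- leaves you needing $H(M)$ approximately constant, which, as you note, neither follows from the theorem's hypotheses nor from concavity of entropy. Rather than patching that step, redo the extremization with a single $R$ over both populations; the $H(M)$ obstruction then disappears entirely. (As a side remark, the paper's own final constants $\log(c_1-c_2)+\log(c_2-c_1)$ involve the logarithm of a negative quantity unless interpreted as $\log|c_1-c_2|^2$, so neither derivation is fully rigorous as written, but the joint-optimization structure is the essential idea your proposal is missing.)
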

\begin{proof}
Consider ${\bf z}_1$ and ${\bf z}_2$ are two partitions of the latent space correspond to two sets of samples with targets $y_1$ and $y_2$. In this case, the loss function in equation \ref{eq:LossR} is given by:
\begin{equation}
\begin{split}
L_R = &-\int p_{z_1}({\bf z})\log(1-(y_1-R(G({\bf z}_1))))dz\\
&-\int p_{z_2}({\bf z})\log(1-(y_2-R(G({\bf z}_2))))dz
\end{split}
\label{T1P1}
\end{equation}
Considering $G({\bf z}_1)={\bf x}_1$, $G({\bf z}_2)={\bf x}_2$, $c_1=1-y_1$, and $c_2=1-y_2$  equation \ref{T1P1} simplifies to
\begin{equation}
L_R = -\int p_{x_1}({\bf x}) \log(c_1+R({\bf x})) dx-\int p_{x_2}({\bf x}) \log(c_2+R({\bf x})) dx
\label{T1P2}
\end{equation}
To find the optimum $R(x)$ the derivative of the integrand is set to zero given by
\begin{equation}
\frac{p_{x_1}}{c_1+R}+\frac{p_{x_2}}{c_2+R}=0
\end{equation}
which results in
\begin{equation}
R = -\frac{p_{x_1}c_2+p_{x_2}c_1}{p_{x_1}+p_{x_2}}
\label{T1P4}
\end{equation}
By replacing equation \ref{T1P4} in equation \ref{T1P2} it simplifies to
\begin{equation}
L_R = -\int p_{x_1}\log\bigg(\frac{(c_1-c_2)p_{x_1}}{p_{x_1}+p_{x_2}}\bigg)+\int p_{x_2}\log\bigg(\frac{(c_2-c_1)p_{x_2}}{p_{x_1}+p_{x_2}}\bigg)dx
\end{equation}
which can be rewritten as
\begin{equation}
\begin{split}
R_L = &-\int \log \Bigg(\frac{p_{x_1}}{\frac{p_{x_1}+p_{x_2}}{2}}\Bigg)-\int \log \Bigg(\frac{p_{x_2}}{\frac{p_{x_1}+p_{x_2}}{2}}\Bigg)\\&-\log(c_1-c_2)-\log(c_2-c_1)-\log(4)
\end{split}
\end{equation}
which equals to
\begin{equation}
R_L=-\log(c_1-c_2)-\log(c_2-c_1)-\log(4)-2JSD(p_{x_1}||p_{x_2})
\end{equation}
minimizing $R_L$ increasing $JSD(p_{x_1}||p_{x_2})$ term, concluding the proof.
\end{proof}
In this section, it has been shown that the presented loss function increases the distance between generated samples for any two set of aspects, which is desirable.
\section{Implementation and Results}
\label{sec:implementationAndResults}
In this section, the implementation of the VAR+GAN is presented and compared to cBiGAN method. To keep the consistency in comparisons, the same architecture for generator network has been kept throughout all implementations. All the networks are trained using Lasagne \cite{LASAGNE} library on the top of Theano \cite{THEANO} library in Python.
\subsection{Network Architectures}
Three main architectures used in this section are encoder, decoder, and regression networks. The first two are shown in figure \ref{fig:architectures}. 
The decoder contains one fully connected layer which maps the input to a 3D layer. Next layers are all convolutional layers followed by $(2, 2)$ un-pooling layers for every second convolution. The exponential linear unit (ELU) \cite{ELU} is used as activation function except in the last layer wherein no non-linearity has been applied except for the encoder in cBiGAN scheme wherein tanh nonlinearity is applied in the output layer. The encoder network is made of convolutional layers with ELU activation function. The downscaling in these layers is obtained by using $(2, 2)$ stride in every second convolutional layer.In the decoder network, all convolutional layers have 64 channels while in the encoder, the number of the channels is gradually increased to 128, 192, and 256 after each pooling layer. The layers shown in red are applying no nonlinearity to their input.
\begin{figure}
\begin{subfigure}[b]{.49\textwidth}
  \includegraphics[width=\columnwidth]{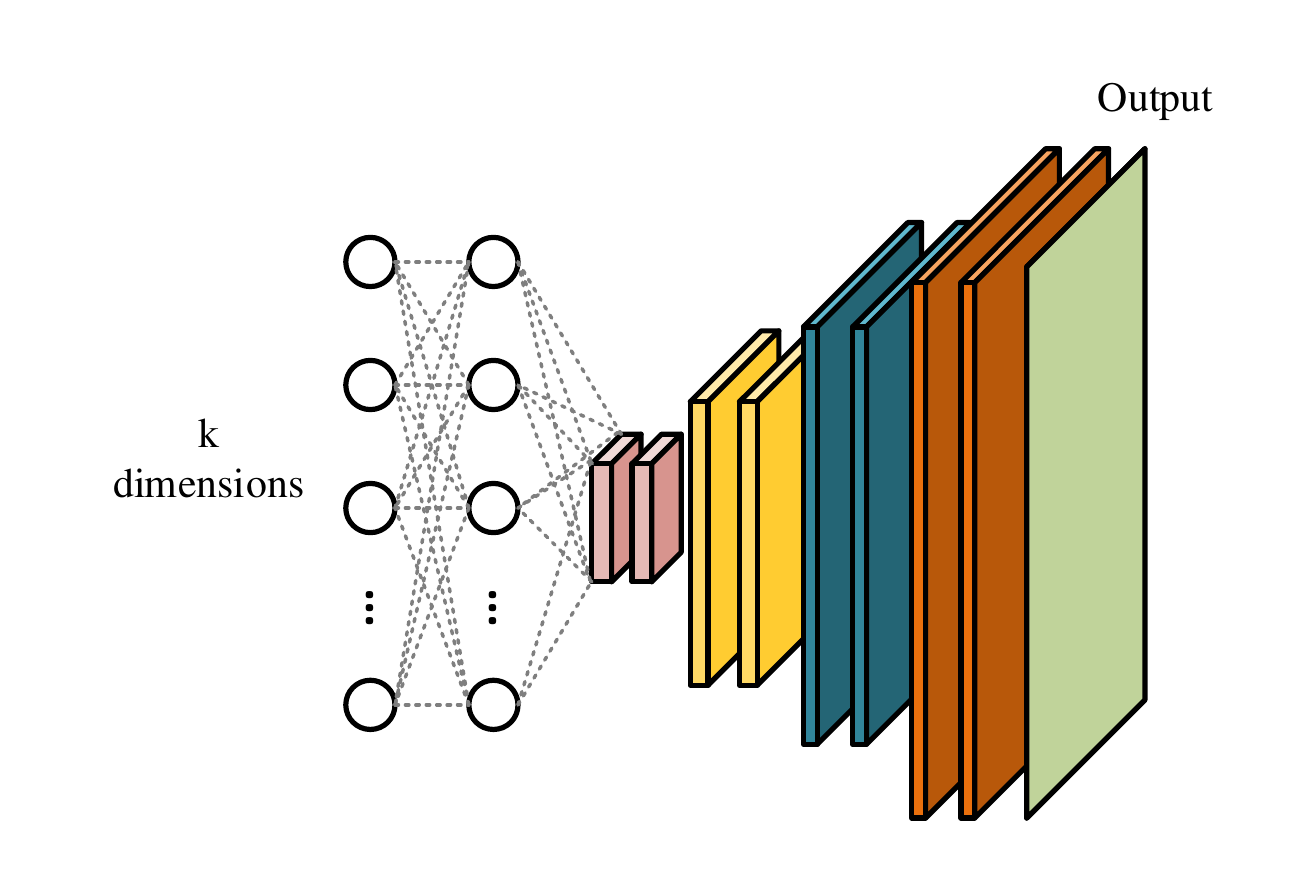}
\caption{Decoder network.}
\label{fig:decoder}       
\end{subfigure}
\begin{subfigure}[b]{.49\textwidth}
  \includegraphics[width=\columnwidth]{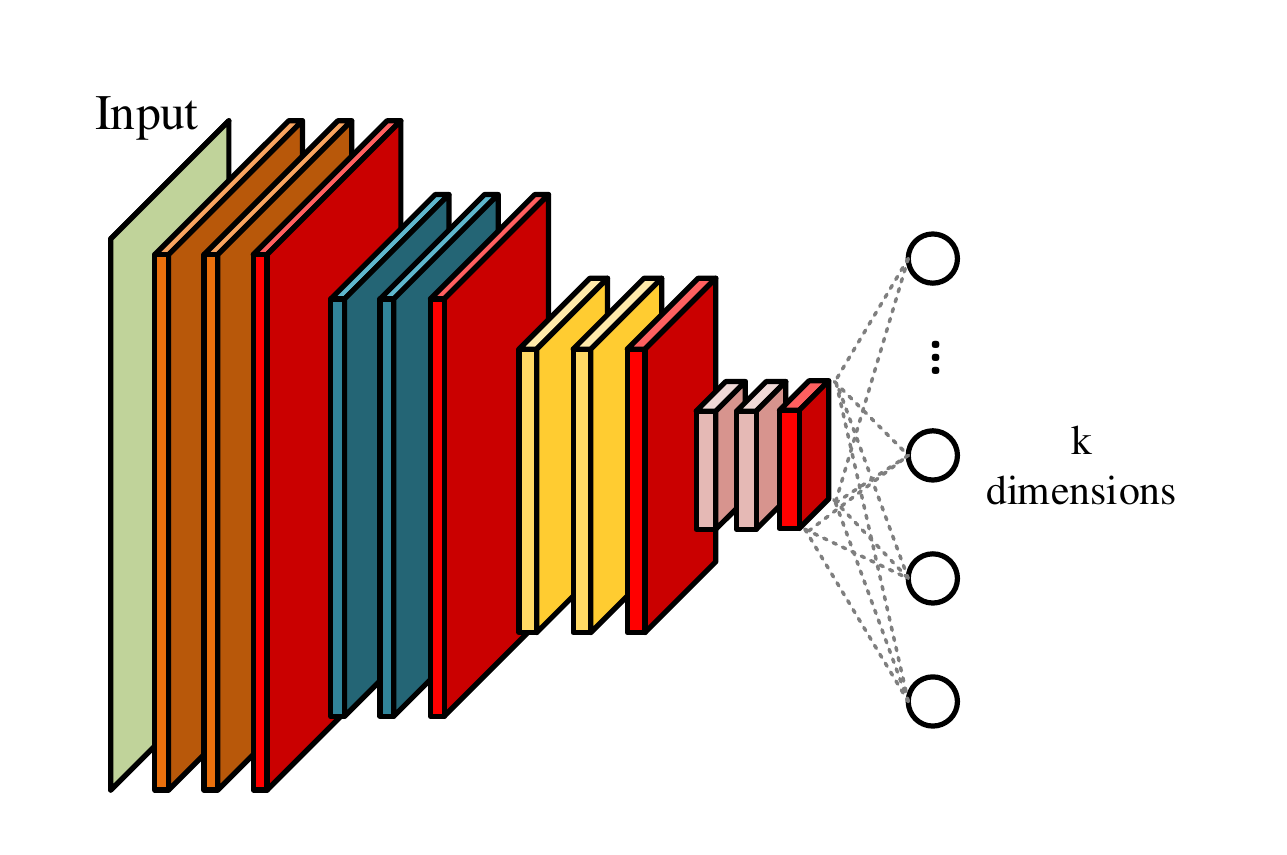}
\caption{Encoder network.}
\label{fig:encoder} 
\end{subfigure}
\caption{Network architectures used for implementation purposes.}
\label{fig:architectures}
\end{figure}
The regression network is a conventional deep neural network shown in table \ref{tab:1}.
\begin{table}
\caption{the Regression network used in experiments.}
\label{tab:1}
\centering       
\begin{tabular}{|l|l|l|l|}
\hline
Layer & Type & kernel & Activation  \\
\hline
Input & Input$(48\times 48)$ & -- & -- \\
\hline
Hidden 1 & Conv & $3\times3$(64 ch)& ReLU \\
\hline
Pool 1 &Max pooling &$2\times2$ &--\\
\hline
Hidden 2 & Conv & $3\times3$(64 ch)& ReLU \\
\hline
Pool 2 &Max pooling &$2\times2$ &--\\
\hline
Hidden 3 & Dense & 1024& ReLU \\
\hline
Output & Dense & 98 & Tanh \\
\hline
\end{tabular}
\end{table}
\subsection{Database}
The dataset used in this work is CelebA database \cite{CelebA} which is made of 202,599 frontal posed images. Face regions are cropped and resized to $48\times 48$ pixels using OpenCV frontal face cascade classifier \cite{H1}. Supervised Descent Method (SDM) \cite{H2} is used for facial point detection. The detector is based on \cite{H3} and it utilizes the discriminative 3D facial deformable model to find 49 facial landmarks including contours of eyebrows, eyes, mouth and the nose. These landmarks are used as the data aspect in this work.
\subsection{Implementation}
\subsubsection{VAR+GAN}
For the proposed scheme (figure \ref{fig:2}), the Boundary Equilibrium  Generative Adversarial Network (BEGAN) \cite{BEGAN} is utilized to train the deep generator. In this method the generator architecture is same as the decoder network shown in figure \ref{fig:decoder} with input dimension $k=128$. The discriminator is an auto-encoder network wherein the decoder architecture is same as the generator and the encoder is the network shown in figure \ref{fig:encoder} with $k=128$. The regression network is shown in table \ref{tab:1} and the loss function for proposed implementation is a modified version of the original BEGAN loss \cite{BEGAN} given by:
\begin{equation}
\begin{split}
&L_d = L(x)-k_t\cdot L(G(z|y))\\
&L_g = \vartheta \cdot L(G(z|y)) + \zeta\cdot L_R\\
&k_{t+1}=k_t+\lambda_{k}\big(\gamma L(x)-L(G(z|y))\big)
\end{split}
\end{equation}
Where $L_g$ and $L_d$ are generators and discriminators losses respectively. $G$ is the generator function, $z$ is a sample from the latent space, $x$ and $y$ are genuine image and corresponding ground truth drawn from the database , $\lambda_k$ is the learning rate for $k$, $\gamma$ is the equilibrium hyper parameter set to 0.5 in this work, $L_R$ is the regression loss given by equation \ref{eq:LossR}, and $\vartheta$ and $\zeta$ are set to 0.97 and 0.03 respectively, and $L$ is the auto-encoders loss defined by
\begin{equation}
L(v) = |v-D(v)|^2
\end{equation} 
The optimizer used for training the generator and discriminator is ADAM with learning rate, $\beta_1$ and $\beta_2$ equal to 0.0001, 0.5 and 0.999 respectively. And the regression network is optimized using nestrov momentum gradient descent with learning rate and momentum equal to 0.01 and 0.9 respectively.\\
\subsubsection{cBiGAN}
The cBiGAN scheme is implemented for the same generator architecture and on the same database to make fair comparisons with the proposed method. The generator architecture is same as the decoder network shown in figure \ref{fig:decoder} with input dimension $k=128$. The discriminator model is same as the encoder network in figure \ref{fig:encoder} with $k=1$ and $sigmoid$ non-linearity at the output layer. And the encoder network in figure \ref{fig:1} has the architecture shown in figure \ref{fig:encoder} with $k=98$ and $tanh$ non-linearity at the output layer. The loss function for this scheme is presented in \cite{CBIGAN} given by
\begin{equation}
\begin{split}
&L_D = \log(p^r)+\log(1-p^I)+\log(1-p^s)\\
&L_G = \log(p^I)\\
&L_E = \log(p^s)+\theta||s^--y||^2
\end{split}
\end{equation}
wherein
$s^-$ is the encoder's output, $y$ is the genuine aspect coming from the database, and 
\begin{equation}
p^r = D(y,x)~~~,~~~
p^I = D(y,G(z|y))~~~,~~~
p^s = D(s^-,x)
\end{equation}
where $D$ and $G$ are discriminator and generator functions respectively, $x$ and $y$ are genuine image and corresponding ground truth drawn from the database, and $z$ is a sample from the latent space. The coefficient $\theta$ is set to $0.8$. The optimizer used for training the model is ADAM with learning rate, $\beta_1$ and $\beta_2$ equal to 0.0001, 0.5 and 0.999 respectively.
\subsection{Results}
\label{sec:results}
In this section, the proposed method is compared against the cBiGAN method while generating faces for a particular landmark point set. The results for six sets of landmarks are shown in figures \ref{fig:results0} to \ref{fig:results77911}. In each figure (left) the outputs from the proposed method for a particular set of landmarks are illustrated while in right side of the figures the output of the generator trained in cBiGAN scheme is given for the same landmarks. 
\begin{figure}
\begin{subfigure}[b]{.49\textwidth}
  \includegraphics[width=\columnwidth]{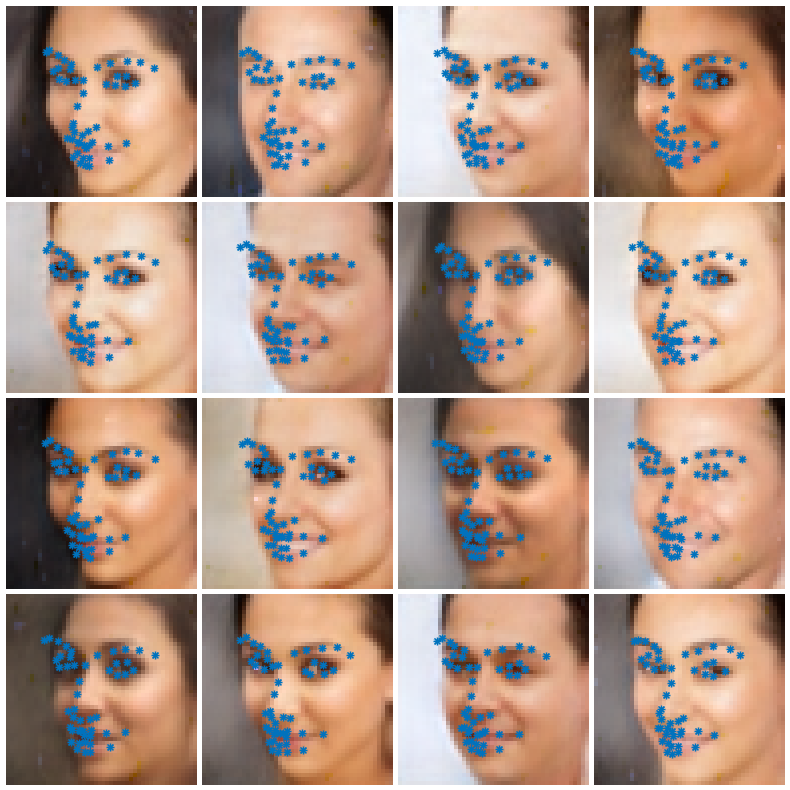}
\caption{Proposed method (VAR+GAN).}
\label{fig:VARGAN0}       
\end{subfigure}
\begin{subfigure}[b]{.49\textwidth}
  \includegraphics[width=\columnwidth]{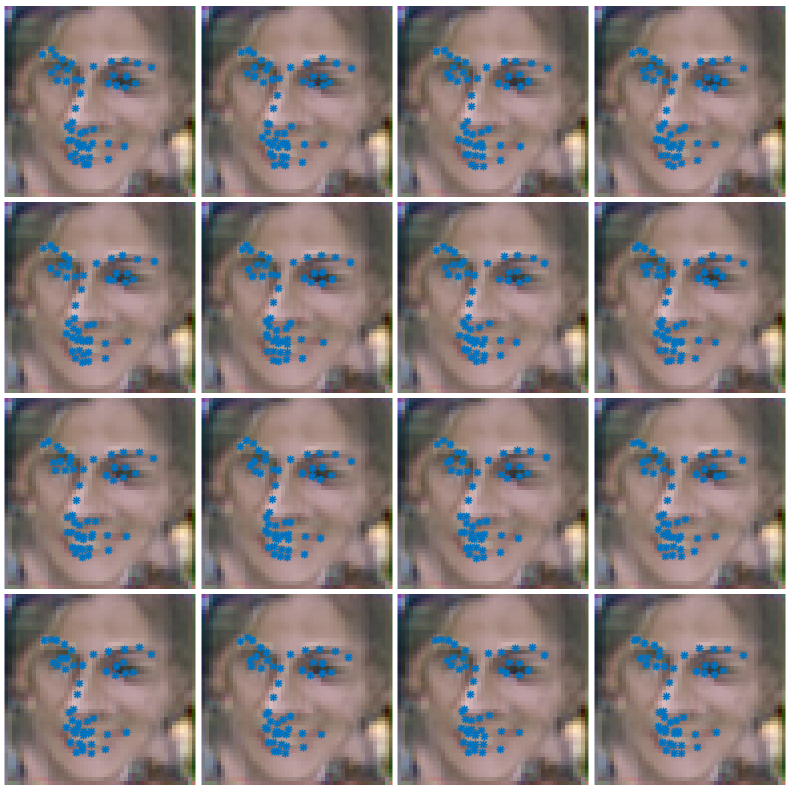}
\caption{cBiGAN.}
\label{fig:cBiGAN0} 
\end{subfigure}
\caption{Generator outputs for proposed method (VAR+GAN) vs cBiGAN.}
\label{fig:results0}
\end{figure}

\begin{figure}
\begin{subfigure}[b]{.49\textwidth}
  \includegraphics[width=\columnwidth]{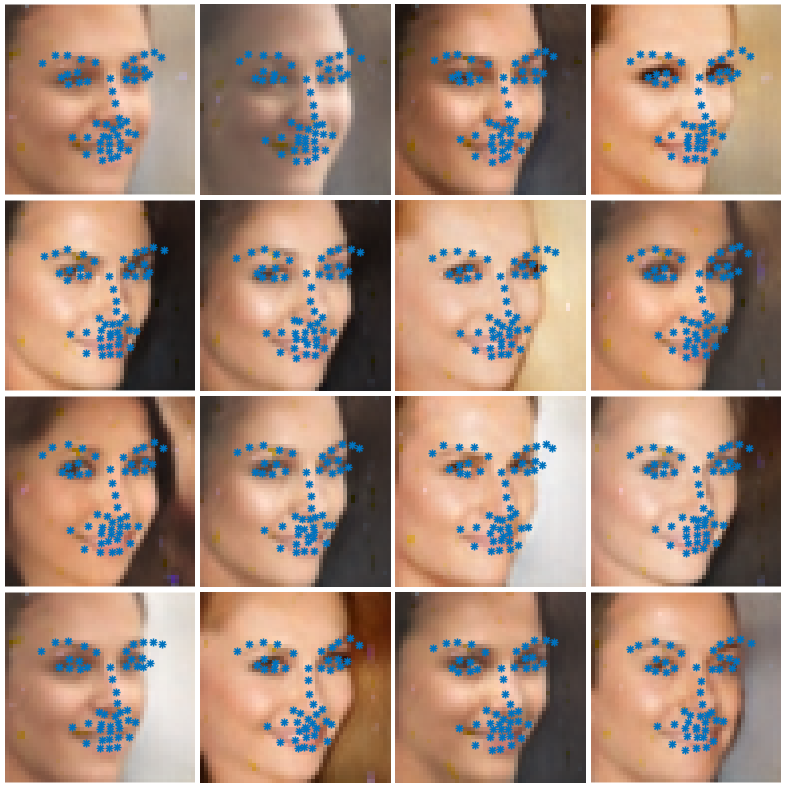}
\caption{Proposed method (VAR+GAN).}
\label{fig:VARGAN6}       
\end{subfigure}
\begin{subfigure}[b]{.49\textwidth}
  \includegraphics[width=\columnwidth]{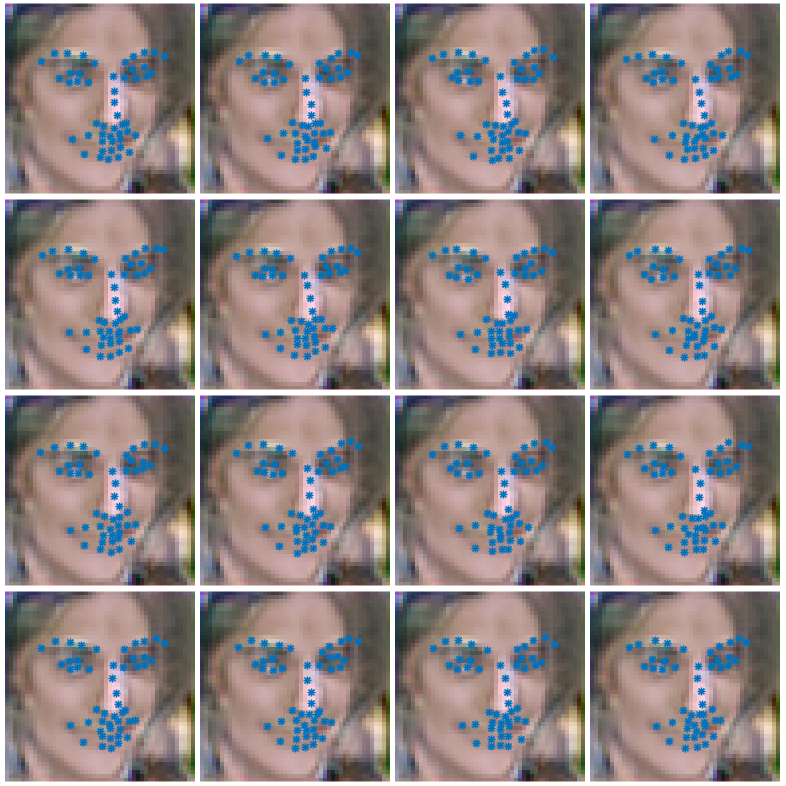}
\caption{cBiGAN.}
\label{fig:cBiGAN6} 
\end{subfigure}
\caption{Generator outputs for proposed method (VAR+GAN) vs cBiGAN.}
\label{fig:results6}
\end{figure}

\begin{figure}
\begin{subfigure}[b]{.49\textwidth}
  \includegraphics[width=\columnwidth]{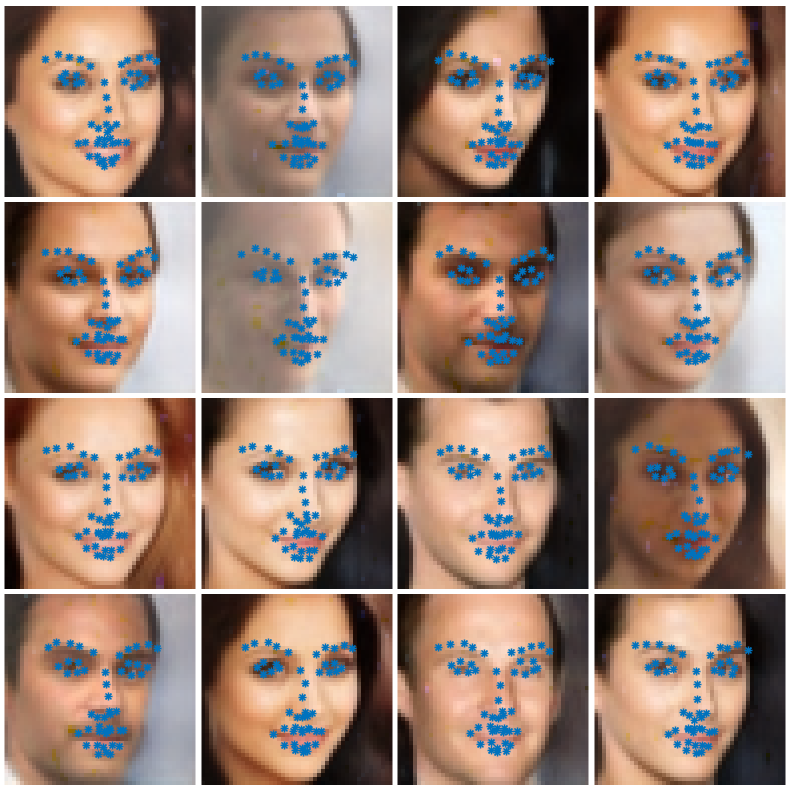}
\caption{Proposed method (VAR+GAN).}
\label{fig:VARGAN8}       
\end{subfigure}
\begin{subfigure}[b]{.49\textwidth}
  \includegraphics[width=\columnwidth]{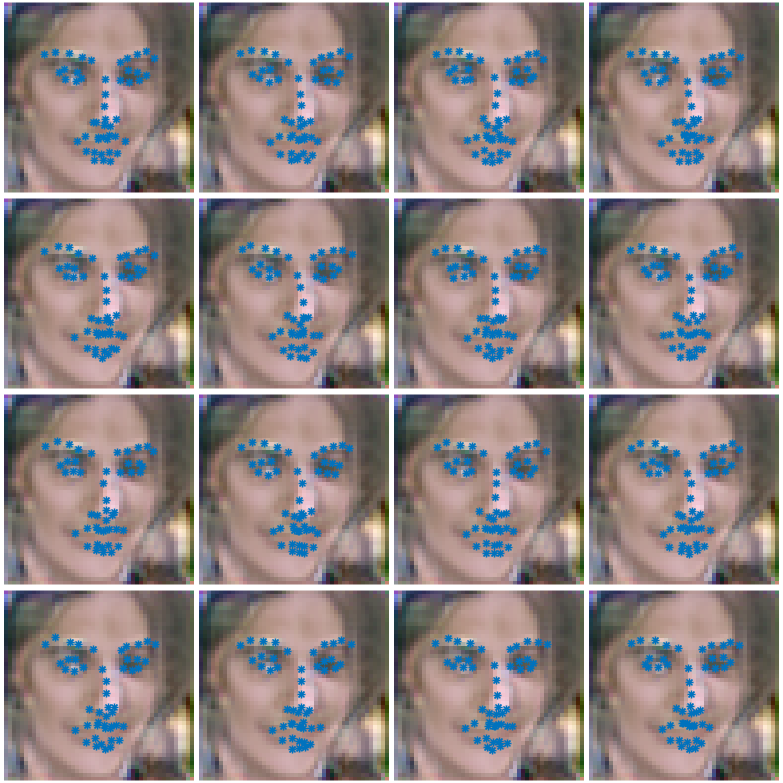}
\caption{cBiGAN.}
\label{fig:cBiGAN8} 
\end{subfigure}
\caption{Generator outputs for proposed method (VAR+GAN) vs cBiGAN given particular landmarks.}
\label{fig:results8}
\end{figure}

\begin{figure}
\begin{subfigure}[b]{.49\textwidth}
  \includegraphics[width=\columnwidth]{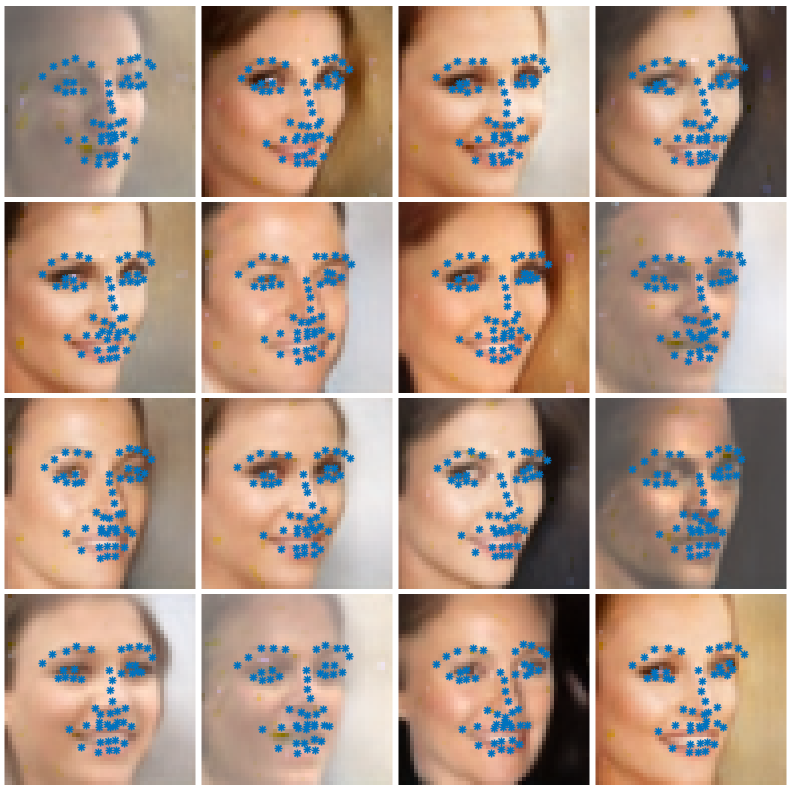}
\caption{Proposed method (VAR+GAN).}
\label{fig:VARGAN911}       
\end{subfigure}
\begin{subfigure}[b]{.49\textwidth}
  \includegraphics[width=\columnwidth]{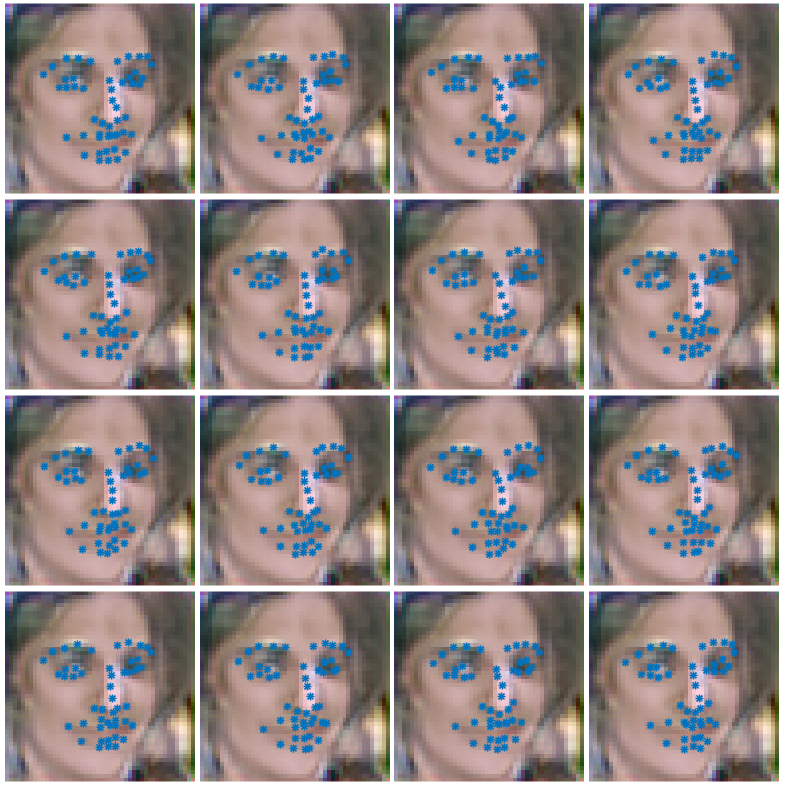}
\caption{cBiGAN.}
\label{fig:cBiGAN911} 
\end{subfigure}
\caption{Generator outputs for proposed method (VAR+GAN) vs cBiGAN given particular landmarks.}
\label{fig:results911}
\end{figure}

\begin{figure}
\begin{subfigure}[b]{.49\textwidth}
  \includegraphics[width=\columnwidth]{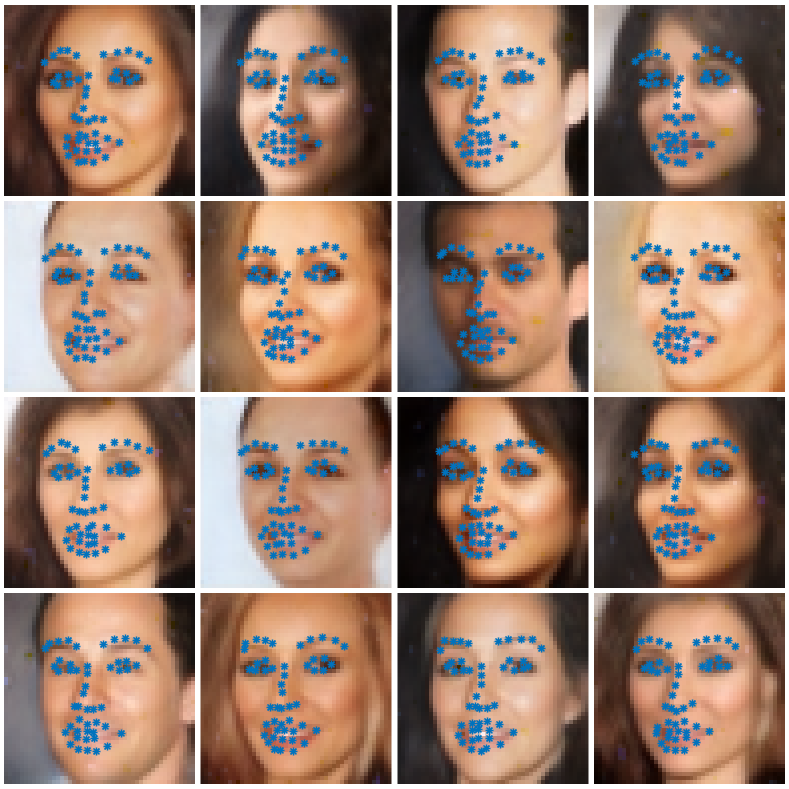}
\caption{Proposed method (VAR+GAN).}
\label{fig:VARGAN7911}       
\end{subfigure}
\begin{subfigure}[b]{.49\textwidth}
  \includegraphics[width=\columnwidth]{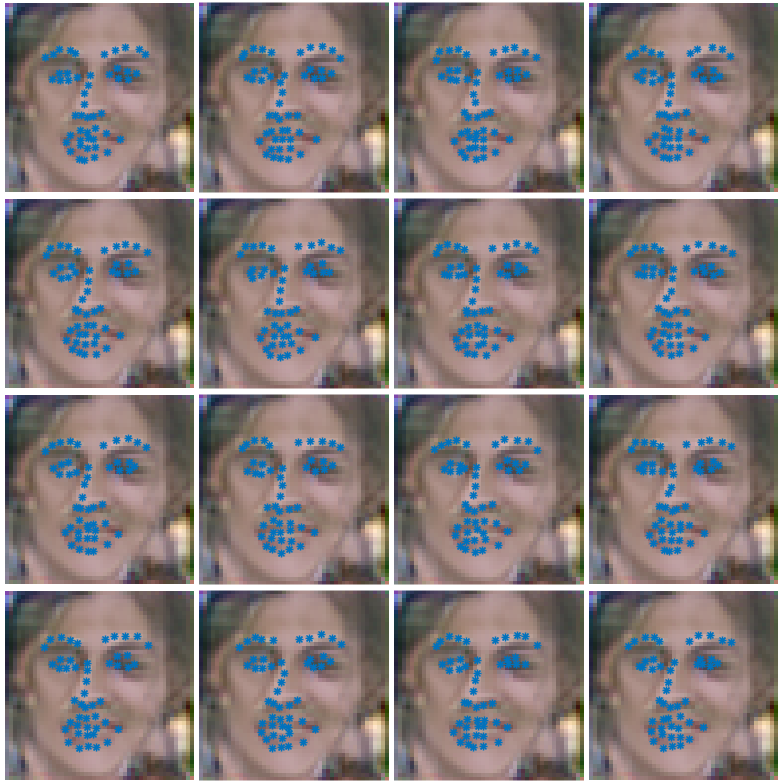}
\caption{cBiGAN.}
\label{fig:cBiGAN7911} 
\end{subfigure}
\caption{Generator outputs for proposed method (VAR+GAN) vs cBiGAN given particular landmarks.}
\label{fig:results7911}
\end{figure}

\begin{figure}
\begin{subfigure}[b]{.49\textwidth}
  \includegraphics[width=\columnwidth]{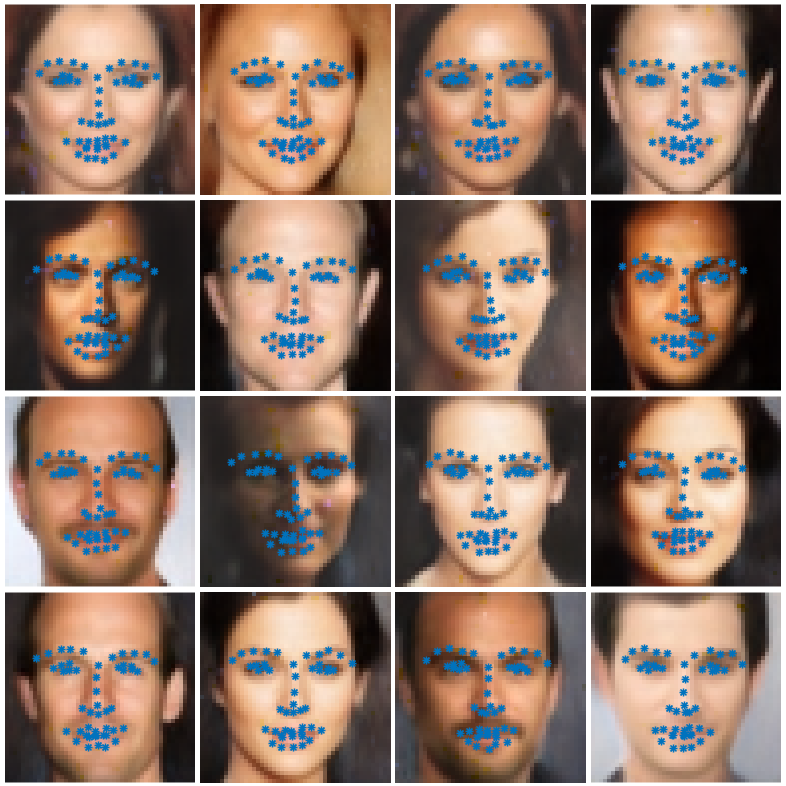}
\caption{Proposed method (VAR+GAN).}
\label{fig:VARGAN77911}       
\end{subfigure}
\begin{subfigure}[b]{.49\textwidth}
  \includegraphics[width=\columnwidth]{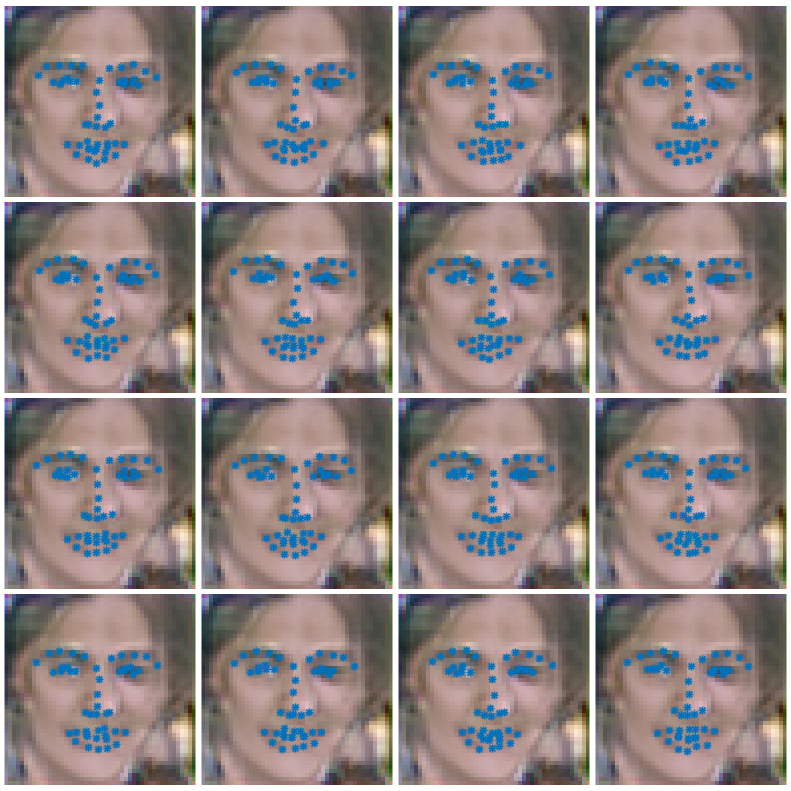}
\caption{cBiGAN.}
\label{fig:cBiGAN77911} 
\end{subfigure}
\caption{Generator outputs for proposed method (VAR+GAN) vs cBiGAN given particular landmarks.}
\label{fig:results77911}
\end{figure}
As shown in these figures, both methods are able to generate samples constrained to a particular set of landmarks but the proposed method generates higher variations of faces for a given landmark set while cBiGAN fails to create different samples in the same condition. The advantage of VAR+GAN is the versatility of the method which facilitates the implementation and also guarantees the higher quality in generated samples. For example in this work the proposed method is taking advantage of simplicity and power of BEGAN implementation and only change applied is to place a regression network and add its error value to the generator's loss. While cBiGAN method is constrained to a specific loss function which is a big disadvantage. 
\section{Conclusion and Future Work} 
In this work, a new scheme for training conditional deep generators has been introduced wherein the generator is able to constrain the generated samples to any continuous aspect of the dataset. The presented method is versatile enough to be applicable to any GAN variation with any network structure and loss function. The idea is to place a regression network in parallel with the discriminator network and back-propagate the regression error through the generator. A new loss function is also presented and it has been shown that it increases the JSD between data generated for any two set of aspects. The other property of the proposed loss is the reduction of the entropy for generated samples which is expectable because of the constraints applied to generated data.\\
The proposed method is also compared with the only available method with the same purpose (to the best of our knowledge in the time writing the article). cBiGAN method generates samples with a particular aspect but there are very low variations between generated samples while VAR+GAN produces higher variations for a specific set of aspects. Being able to generate variable samples is crucial for tasks including augmentation purposes. Being able to augment the database for certain data aspects and use them in training the final products is one of the most interesting applications for the conditional generators.\\
The future works include merging the VAC+GAN and VAR+GAN methods to constrain the generator to create samples from a specific class with a particular continuous aspect. And also investigating the influence of the generated samples in augmentation task for different applications.


\section*{References}
\bibliographystyle{elsarticle-num} 
\bibliography{lib}





\end{document}